\documentclass[11pt]{article}

\usepackage{amsthm,amsmath,amssymb}
\usepackage{graphicx,color,epsfig}
\usepackage[margin=1in]{geometry}

\newtheorem{Thm}{Theorem}[section]

\newtheorem{Lem}[Thm]{Lemma}
\newtheorem{Pro}[Thm]{Proposition}

\newcommand{\N}{\mathbb{N}}
\newcommand{\R}{\mathbb{R}}

\DeclareMathAlphabet{\mathpzc}{OT1}{pzc}{m}{it}

\begin{document}

\title{Revisiting asymptotic periodicity in networks of degrade-and-fire oscillators}
\author{Bastien Fernandez}
\date{}   
\maketitle

\begin{center}
Laboratoire de Probabilit\'es, Statistique et Mod\'elisation\\
CNRS - Univ. Paris 7 Denis Diderot -  Sorbonne Univ.\\
75205 Paris CEDEX 13 France\\
fernandez@lpsm.paris
\end{center}

\begin{abstract}
Networks of degrade-and-fire oscillators are elementary models of populations of synthetic gene circuits with negative feedback, which show elaborate phenomenology while being amenable to mathematical analysis. In addition to thorough investigation in various examples of interaction graphs, previous studies have obtained conditions on interaction topology and strength that ensure that asymptotic behaviors are periodic (assuming that the so-called firing sequence is itself periodic and involves all nodes). Here, we revisit and extend these conditions and we analyse the dynamics in a case of unidirectional periodic chain. This example shows in particular that the updated conditions for asymptotic periodicity are optimal. Altogether, our results provide a novel instance of direct impact of the topology of interactions in the global dynamics of a collective system.
\end{abstract}

\leftline{\small\today.}
\bigskip

\centerline{\sl Dedicated to the memory of Valentin Afraimovich.}

\section{Networks of degrade-and-fire oscillators}
One important challenge in nonlinear dynamics is to evaluate the impact of interaction topology (and strength) on the functioning of networks of interacting units \cite{AD-GKMZ08,BCFV02}. How does the way the units are coupled affect the long term organization of a collective system? This question has received considerable attention in the theoretical literature \cite{BLMCH06} and is motivated by applications in many disciplines, especially in Physics, Biology and Social Sciences \cite{S01}. From a rigorous mathematical viewpoint, results are however scarce and limited \cite{CF05}. Studies have addressed either special cases such as weak-coupling regimes  or synchronisation \cite{B97,HI97} or specific systems such as pulse-coupled oscillators with excitatory coupling \cite{B95,MS90}. Hence, there is a critical need for additional insights, especially those that can provide the theory with complementary rigorous footing.

Recently, a simple dynamical model of populations of pulsatory oscillators with inhibitory coupling has been proposed, that shows non-trivial phenomenology while being  amenable to mathematical analysis. The model, a network of degrade-and-fire oscillators, results from the simplification of some system of delay-differential equations \cite{MHT14} inspired from experiments on colonies of synthetic gene circuits \cite{DM-PTH10,M-PDSTH11}. It can be summarized as follows, see e.g.\ \cite{HM-MIC01,SBB00} for more details about modelling of gene regulatory networks. Assuming a population of $N$ cells, let $x=(x_i)_{i=1}^N\in [0,1]^N$ be the vector of (normalized) expression levels of a certain gene in each cell. Furthermore, given a threshold parameter $\eta\in (0,1)$ and $W=(w_{ij})_{i,j=1}^N$ a stochastic non-negative matrix satisfying $w_{ii} > \eta$ for all $i$, let the vector $Wx=(Wx_i)_{i=1}^N$ where 
\[
W x_i = \sum_{j=1}^N w_{ij} x_j,\ \forall i\in \{1,\dots,N\},
\]
represent the expression levels of a related repressor agent. The matrix $W$ captures both interaction topology and its strength. It materializes the assumption that the repressive constituent is both directly proportional to gene expression level (self-repressor gene) and sufficiently small to diffuse through cell membranes (key experimental principle for inter-cellular coupling between genes \cite{DM-PTH10,M-PDSTH11}).

Now, the evolution of time-dependent expression levels $x(t)=(x_i(t))_{i=1}^N$ for $t\in\R^+$ is given by the following singular differential equation
\begin{align}\label{eq:diffEQ}
\begin{split}
\begin{array}{l l}
\dot{x}_i(t) = - \text{Sgn}(x_i(t)) & \text{if} \quad W x_i(t) > \eta, \\
\bigg\{ \begin{array}{l}x_i(t) = x_i(t-0) \\ x_i(t + 0) = 1 \end{array} & \text{if} \quad W x_i(t) \leq \eta.
\end{array} 
\end{split}
\end{align}
In other words, the dynamics in cell $i$ consists of two phases, depending on the value of $Wx_i(t)$.
\begin{itemize}
\item  If $W x_i(t) > \eta$, then the expression level $x_i(t)$ {\bf degrades} at constant speed $-1$, unless $x_i(t)=0$ (in which case, it remains steady). This phase may eventually yield $W x_i(t)\leq \eta$, depending on expression levels in influencing cells. 
\item If $W x_i(t) \leq \eta$, a {\bf firing} takes place and resets the expression level to the value 1. The assumption $w_{ii} > \eta$ ensures that $W x_i(t+0) > \eta$ immediately after firing and the reset genes return to the degrade phase for a positive-length time interval. Actually, this assumption allows one to prove that the dynamics is globally well-posed (and $W x_i(t) \geq \eta$ for all $t$), when assuming that all cells are initially in the degrading phase (Lemma 2.1 in \cite{BF16}).
\end{itemize}
In each cell, the expression level behavior thus consists of an infinite succession of degrading phases interrupted by instantaneous firings, unless the repressor level eventually remains above the threshold. Depending on $W$, some genes may indeed reach a vanishing stationary state. But if the external repressor contribution to cell $i$ is not too important (ie.\ $\sum_{j\neq i}w_{ij}<\eta$), then one can show that the corresponding gene never stops firing (Lemma 3.1 in \cite{BF16}); hence convergence to a stationary state is impossible.

In addition, trajectory asymptotic behaviors depend on firing patterns, viz.\ on the way the firings are distributed in the population. Depending again on $W$, repetitive firings may occur in some cells between two consecutive resets in other cells. However, if $W$ is doubly stochastic and $w_{ij}<\tfrac1{N}$ for all $j\neq i$, then every cell $j\neq i$ must fire between any two consecutive firings in cell $i$ alone. When this property holds for all cells, the sequence $\{i_k\}_{k\in\N}$ where $i_k$ denotes the reset cell label at firing $k$, is said to be {\bf exhaustive}. In other words, an exhaustive firing sequence is periodic and each of its length $N$ segment must be a permutation of $\{1,\dots ,N\}$.

The occurrence of exhaustive firing sequences has been identified as a nice and crucial feature of the degrade-and-fire dynamics. In fact, under certain conditions on $W$ (which are discussed below), it implies asymptotic periodic behavior of the expression levels themselves. The limit trajectory is a closed loop, which is unique (and explicitly computable) for any given firing word $\{i_k\}_{k=1}^N$. Usually, the loop is a {\bf minimal} periodic orbit of \eqref{eq:diffEQ} (ie.\ it returns to its initial location after every cell has fired once). Nonetheless, due to the singular nature of the dynamics, there are exceptional parameter values for which the loop itself is not invariant \cite{BF17}. 

Therefore, if $W$ is also such that every firing sequence must be eventually exhaustive, it suffices to examine the existence of the corresponding periodic orbits/loops to characterize the asymptotic dynamics. This approach yielded full description of the population dynamics for mean-field coupling $Wx_i=(1-\epsilon)x_i+\frac{\epsilon}{N}\sum_{j=1}^Nx_j$ for $\epsilon \in (0,1-\eta)$ (where exhaustivity should be understood up to cluster considerations, see \cite{FT11}) and also for some low dimensional examples \cite{BF16}.

The purpose of this paper is to revisit and improve the conditions on $W$ for periodic behavior under the exhaustive firing sequence assumption. In particular, we provide an alternative condition for uniqueness of the minimal periodic loop associated with a given exhaustive sequence (Section 2). More importantly, we improve previous conditions for asymptotic periodicity (Section 3) and show that these updated conditions are optimal, namely we provide a counter-example where asymptotic periodic fails when the conditions do not hold (Section 4). More precisely, we consider a periodic 3-cell chain in a parameter regime where every firing sequence must be eventually exhaustive and we show that when the firing pattern does not satisfy the improved conditions, no convergence to a single periodic loop takes place. Instead, a full family of non-minimally periodic trajectories exist which share the same firing sequence. 

Altogether, our results provide a novel instance of direct impact of the topology of interactions in the global dynamics of a collective system. We hope that the methods developed here can be applied to other systems with oscillatory dynamics. 

\section{Periodic orbit uniqueness}  
In this section, we shall be concerned with periodic orbit that are {\bf non-degenerate}, namely such that every expression level that vanishes at some time is reset at the next firing. In \cite{BF16}, we proved uniqueness of non-degenerate minimal periodic orbits associated with exhaustive firing sequences, under the assumptions that $W$ be irreducible and doubly-stochastic. Here, we show that the last assumption is not necessary and can be replaced by aperiodicity.
\begin{Pro}
Assume that $W$ is primitive. Then, given any exhaustive firing sequence, either no compatible non-degenerate minimal periodic orbit exists, or such a trajectory is unique.  
\label{UNIPER}
\end{Pro}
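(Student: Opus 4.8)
The plan is to set up explicit coordinates for a candidate non-degenerate minimal periodic orbit associated with a fixed exhaustive firing word $\{i_k\}_{k=1}^N$, derive the linear system that the "firing times" (equivalently, the expression-level configuration at the start of a period) must satisfy, and show that primitivity of $W$ forces this system to have at most one admissible solution. Since the firing sequence is exhaustive, over one period each cell fires exactly once; label the cells so that $i_k = k$, i.e.\ cell $k$ is the $k$-th to fire. A minimal periodic orbit is then determined by the vector of inter-firing durations $\tau = (\tau_1,\dots,\tau_N)$, where $\tau_k>0$ is the time elapsed between the $(k-1)$-st and $k$-th firing. Because degradation proceeds at unit speed and each cell is reset to $1$ at its own firing and must return exactly to its post-firing value one full period later, one obtains, for each $k$, a linear equation expressing the condition $Wx_k = \eta$ at the firing instant of cell $k$: the level $x_j$ at that instant equals $1$ minus the time since $j$ last fired (capped at $0$, but non-degeneracy rules out the cap), and summing against $w_{kj}$ and setting equal to $\eta$ gives one scalar constraint per $k$. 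This yields an affine system $A\tau = b$ (with $b$ built from $\eta$ and the row sums of $W$), where $A$ is an $N\times N$ matrix whose entries are partial sums of the $w_{kj}$ dictated by the cyclic order of firings.

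Next I would analyze the matrix $A$. The key structural point is that $A$ is, up to a cyclic/triangular rearrangement, of the form $D - L$ or can be reduced (by differencing consecutive equations, exploiting the $\bmod N$ cyclic structure of the firing order) to a matrix of the shape $I - P$, where $P$ is a nonnegative matrix directly built from $W$ by redistributing each weight $w_{kj}$ according to where $j$ sits in the firing order relative to $k$. Crucially, $P$ inherits irreducibility from the primitivity (hence irreducibility) of $W$, and it is sub-stochastic with at least one deficient row — the row corresponding to the diagonal contribution $w_{kk}>\eta$ ensures a strict deficiency somewhere. By the standard Perron–Frobenius argument for irreducible nonnegative matrices with a row sum strictly less than $1$, the spectral radius of $P$ is strictly less than $1$, so $I-P$ is invertible. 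Therefore the affine system has a unique solution $\tau$, which is the only possible candidate; if that candidate happens to have a non-positive coordinate or violates the ordering/non-degeneracy constraints then no compatible orbit exists, and otherwise it gives the unique one.

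The main obstacle I anticipate is bookkeeping rather than conceptual: correctly writing the firing-time equations in a way that transparently exhibits the reduced matrix $I-P$ with $P$ nonnegative and irreducible, keeping careful track of which cells have "wrapped around" the period at the instant cell $k$ fires (this is where the cyclic order enters and where signs must be handled with care), and verifying that non-degeneracy is exactly the hypothesis needed to discard the $\max(\cdot,0)$ nonlinearity so that the problem is genuinely linear. A secondary point requiring care is showing that aperiodicity (primitivity) rather than mere irreducibility is what is actually invoked: in fact irreducibility of $W$ transfers to irreducibility of $P$, and the strict row deficiency comes from $w_{kk}>\eta$, so the Perron–Frobenius bound $\rho(P)<1$ follows; one should check whether primitivity is needed to guarantee that every cell genuinely participates (so that the reduction to an $N\times N$ nonsingular system is legitimate and no cell can be "frozen" at $0$), which is presumably where the degenerate-orbit caveat in the statement originates. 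Once $\rho(P)<1$ is established, uniqueness is immediate from invertibility of $I-P$.
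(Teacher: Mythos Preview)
Your strategy---reduce the periodic-orbit problem to a linear system in the firing times and prove the coefficient operator is invertible---is the same as the paper's. But two concrete steps in your outline would not go through as stated.

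First, non-degeneracy does not remove the cap for the firing cell itself. The definition used here is that any cell whose level reaches $0$ must be the \emph{next} one to fire; it does not forbid that cell from hitting $0$. So at the instant cell $k$ fires you do have $x_j(t_k)=1-(\text{time since }j\text{ last fired})$ for every $j\neq k$, but $x_k(t_k)$ is either $1-t_N$ (if the period satisfies $t_N<1$) or $0$ (if $t_N\geq 1$). The paper splits into these two cases; in each the right-hand side is a different affine function of $t_N$, and a separate (complementary) explicit formula for $t_N$ emerges. Your single linear system conflates the two and would give the wrong answer in the second regime.

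Second, the invertibility mechanism you sketch is not the one that works, and the reasons you give are off. After the bookkeeping the operator to be inverted is not $I-P$ with a sub-stochastic $P$; it is $\Delta=W-I$ itself. This has a one-dimensional kernel (the constants) and range equal to the hyperplane $\Sigma_W=\{x:e_W^T x=0\}$, where $e_W$ is the left Perron eigenvector. The paper uses the solvability constraint $R_\pi^{-1}u\in\Sigma_W$ as a single scalar equation that pins down $t_N$; then primitivity gives $\|W^k|_{\Sigma_W}\|_{\ell^1}<1$ for some $k$, hence $\Delta|_{\Sigma_W}$ is invertible and $x_\perp$ is determined; the remaining kernel direction is fixed by the normalisation $x_{\pi_N}=1$. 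Your proposed source of a strict row deficiency, the hypothesis $w_{kk}>\eta$, cannot be right: $\eta$ enters only the right-hand side of the system, never the coefficient matrix. And primitivity is not used to ensure that every cell participates (exhaustivity of the firing word already guarantees that); it is used precisely to obtain the spectral gap on $\Sigma_W$.
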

The matrix $W$ associated with mean-field coupling is evidently primitive (and also doubly-stochastic); hence periodic orbit uniqueness for mean-field coupling can be regarded as a special case of Proposition \ref{UNIPER}. Furthermore, while optimal conditions on $W$ for this Proposition's claim to hold are yet to be determined, limitations must be imposed because counter-examples exist for which the claim does not hold. Indeed, for $W=\text{Id}$ (ie.\ the diagonal matrix with all diagonal entries equal to 1), every configuration $(x_1,x_2,\dots,x_{N-1},1)$ with $\eta<x_1$ and $x_i<x_{i+1}$ for all $i$, is periodic with firing sequence defined by repeating the word $(1,2,\dots, N)$.

\begin{proof} 
The proof follows the same lines as the proof of Proposition 4.1 in \cite{BF16}. Given an arbitrary permutation $\pi$ of the cell indexes $\{1,\dots ,N\}$, we assume that a non-degenerate minimal periodic trajectory $t\mapsto x(t)$ exists with firing sequence obtained by repeating the word $(\pi_i)_{i=1}^N$. We prove that this trajectory must be unique, given $W$ and $\eta$. 

Let $(t_k)_{k=1}^N$ with $t_k<t_{k+1}$ be the first N firing times and let $R_\pi$ be the representation of $\pi$ on $\R^N$, ie.\ $R_\pi x_i=x_{\pi_i}$ for all $i$. Imposing periodicity after the $N$th firing, viz.\  $x(0)=x(t_N+0)$, implies that the initial coordinates $x(0)=x$ must be given by 
\begin{equation}
R_\pi x_i=1-t_N+t_i,\ \forall i\in\{1,\dots,N\}.
\label{eq:periodicity}
\end{equation}
We aim to show that the firing times $(t_k)_{k=1}^N$ must be unique. Their monotonicity implies $R_\pi x_i<R_\pi x_{i+1}$, and also $R_\pi x_1>0$ from the non-degeneracy assumption and $R_\pi x_N=1$. From \eqref{eq:periodicity}, the differences $R_\pi x_i-t_i=1-t_N$ do not depend $i$. To proceed, we separate the cases $1-t_N>0$ and $1-t_N\leq 0$.

Assume the first case $1-t_N>0$. From \eqref{eq:periodicity} and \eqref{eq:diffEQ}, one obtains the following gene expression levels immediately before the $i$th firing:
\[
R_\pi x_j(t_i)=\left\{\begin{array}{ccl}
1-t_i+t_j&\text{if}&j<i \, ,\\
1-t_N+t_j-t_i&\text{if}&j\geq i.
\end{array}\right.
\]
Therefore, the set of equations $R_\pi Wx_i(t_i)=\eta$ for $i\in\{1,\dots,N\}$ can be written in a condensed form as $R_\pi \Delta x=u$ where $\Delta=W-\text{Id}$ and
\[
u_i=1-\eta-t_Nv_i\quad\text{where}\quad v_i=\sum_{j=i}^Nw_{\pi_i\pi_j}\ \forall i\in\{1,\dots,N\}.
\]
By Perron-Frobenius Theorem, the normalized left eigenvector $(e_W)^T=(e_W)^TW$, associated with the eigenvalue 1, is unique. Moreover, the definition of $\Delta$ obviously implies 
\[
\Delta(\R^N)\subset \Sigma_W=\left\{x\in\R^N\ :\ (e_W)^T.x=0\right\}.
\]
From the firing time equation, it follows that $R_\pi^{-1}u\in \Sigma_W$ (NB: $R_\pi$ is a permutation matrix, hence $R_\pi^{-1}$ is well defined) and hence $(e_W)^T.R_\pi^{-1}u=0$. The expression of $u$, the property $(e_W)^T.(1)_{i=1}^N=1$ (Perron-Frobenius) and the $\ell_1$-normalization of $e_W$ then yield the expression (NB: we must have $(e_W)^T.R_\pi^{-1}v>0$ so that $t_N$ is well-defined.)
\[
t_N=\frac{N(1-\eta)}{(e_W)^T.R_\pi^{-1}v},
\]
Therefore, $u$ only depends on $\pi,W$ and $\eta$ and the condition $1-t_N>0$ is equivalent to $\frac{N(1-\eta)}{(e_W)^T.R_\pi^{-1}v}<1$.

It remains to solve the firing time equation $R_\pi \Delta x=u$. Writing $x=x_\parallel+x_\perp$, where $x_\parallel=c(e_W)^T$ for some $c\in\R$ and $x_\perp\in\Sigma_W$, the equation $R_\pi \Delta x=u$ becomes $\Delta x_\perp=R_\pi^{-1}u$. Moreover, given any $x_\perp$, the constant $c$ is determined by using the normalization $x_{\pi_N}=1$, {\sl i.e.}\ $c(e_W)_N=1-(x_\perp)_{\pi_N}$.

Now, Perron-Frobenius also implies that $\|W|_{\Sigma_W}^k\|_{\ell_1}<1$ provided that $k$ is sufficiently large ($W$ is primitive). Hence $W-\text{Id}$ is invertible on $\Sigma_W$. Using that $R_\pi^{-1}u\in \Sigma_W$, this implies that $\Delta x_\perp=R_\pi^{-1}u$ has a unique solution in $\Sigma_W$. Uniqueness is proved in the case 
$1-t_N>0$. 

In the case where $1-t_N\leq 0$, we have 
\[
R_\pi x_j(t_i)=\left\{\begin{array}{ccl}
1-t_i+t_j&\text{if}&j<i \, , \\
0&\text{if}&j=i \, ,\\
1-t_N+t_j-t_i&\text{if}&j> i \, ,
\end{array}\right.
\]
and the firing time equation now reads $\Delta x=R_\pi^{-1}u'$ where 
\[
u'_i=1-\eta-v'_i-t_Nv''_i\quad\text{where}\quad v'_i=w_{\pi_i\pi_i}\ \text{and}\ v''_i=v_i-w_{\pi_i\pi_i}\ \forall i\in\{1,\cdots,N\}.
\]
Using as before the condition $R_\pi^{-1}u'\in\Sigma_W$, we obtain that the last firing time $t_N$ must be given by
\[
t_N=\frac{N(1-\eta)-(e_W)^T.R_\pi^{-1}v'}{(e_W)^T.R_\pi^{-1}v''}.
\]
Notice that the condition $1-t_N\leq 0$ is exactly complementary to the previous one $\frac{N(1-\eta)}{(e_W)^T.R_\pi^{-1}v}<1$. Moreover, uniqueness follows as in the previous case by solving $\Delta x_{\perp}=R_\pi^{-1}u'$. Therefore, there is at most one solution in all cases, when the parameters $\pi,W$ and $\eta$ are given. 
\end{proof}

In addition to relaxing the double-stochasticity condition, one advantage of the assumption in Proposition \ref{UNIPER} is that it allows one to immediately address trajectories with silent genes. For simplicity, we assume in the next statement that only one gene, say $N$ w.l.o.g., is silent. The statement easily extends to cases where several genes never fire.  
\begin{Pro}
Assume that $W$ is primitive and that node $N$ is such that no node $i\neq N$ only receives input from $N$, {\sl viz}\ $w_{iN}<1$ for all $i=\{1,\dots ,N-1\}$.  
Then, given any permutation $\pi^{(N-1)}$ of $\{1,\dots,N-1\}$, either no compatible non-degenerate minimal periodic orbit with $x_N(t)=0$ for all $t\in\R^+$ exists, or such a trajectory is unique.  
\end{Pro}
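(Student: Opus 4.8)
The plan is to rerun the proof of Proposition~\ref{UNIPER} on the $(N-1)$-node system obtained once the silent coordinate is erased. Since $x_N(t)=0$ for all $t$, the effective repressor level of an active node $i\in\{1,\dots,N-1\}$ is $\sum_{j=1}^{N-1}w_{ij}x_j$, governed by the principal submatrix $\widehat W=(w_{ij})_{i,j=1}^{N-1}$. This matrix is substochastic, with row sums $\sigma_i=\sum_{j=1}^{N-1}w_{ij}=1-w_{iN}>0$ by hypothesis, so one may renormalize: an active node $i$ fires exactly when $\sum_{j=1}^{N-1}p_{ij}x_j\le\eta_i$, where $P=(p_{ij})=\text{diag}(\sigma_i)^{-1}\widehat W$ is now \emph{stochastic} and $\eta_i:=\eta/\sigma_i$ is a node-dependent threshold. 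Because $W$ is primitive, $\widehat W$ is a proper principal submatrix of an irreducible matrix and therefore has spectral radius $<1$; moreover $P$ has the same zero pattern as $\widehat W$, which the hypothesis on node $N$ renders irreducible, and since $P$ also has a positive diagonal it is primitive, so Perron--Frobenius applies to $P$ just as it did to $W$ in Proposition~\ref{UNIPER}.

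Now fix a permutation $\pi=\pi^{(N-1)}$ of $\{1,\dots,N-1\}$, assume a non-degenerate minimal periodic orbit with $x_N\equiv0$ and firing word $(\pi_i)_{i=1}^{N-1}$ exists, and let $t_1<\dots<t_{N-1}$ be its first $N-1$ firing times. Writing $x$ for the vector of active expression levels, minimal periodicity forces $R_\pi x_i=1-t_{N-1}+t_i$ for $i\le N-1$ (with $R_\pi x_{N-1}=1$) exactly as in Proposition~\ref{UNIPER}, while the coordinate $x_N\equiv0$ simply drops out of every firing equation. Hence the firing times are constrained by a condensed system $R_\pi(P-\text{Id})x=\widehat u$, where $\widehat u$ has precisely the form of the vector $u$ (resp.\ $u'$) of Proposition~\ref{UNIPER} in the case $1-t_{N-1}>0$ (resp.\ $1-t_{N-1}\le 0$), except that the scalar $1-\eta$ is replaced by the node-dependent quantities $1-\eta_i$ and the sums $v_i$, $v'_i$, $v''_i$ are built from $P$ rather than $W$. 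From here the proof of Proposition~\ref{UNIPER} transcribes almost verbatim: $(P-\text{Id})(\R^{N-1})$ is the hyperplane $\Sigma_P$ orthogonal to the left Perron vector $e_P$, so the requirement $R_\pi^{-1}\widehat u\in\Sigma_P$ yields an explicit value of $t_{N-1}$ (well defined once the analog of the condition $(e_W)^T.R_\pi^{-1}v>0$ is checked); then, with $t_{N-1}$ fixed, the splitting $x=x_\parallel+x_\perp$, invertibility of $P-\text{Id}$ on $\Sigma_P$ (using $P$ primitive), and the normalization $x_{\pi_{N-1}}=1$ determine $x$ uniquely. The two cases being complementary, at most one such orbit exists --- existence requiring in addition that $\sum_{j=1}^{N-1}w_{Nj}x_j(t)>\eta$ for all $t$, so that node $N$ does stay at $0$.

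The one step that needs genuine care is the bookkeeping created by the renormalization: one must verify that passing from the uniform threshold $\eta$ to the vector $(\eta_i)$ does not spoil the case dichotomy. In the first case the last firing time now reads $t_{N-1}=\bigl(\sum_i(e_P)_i(1-\eta_i)\bigr)/\bigl((e_P)^T.R_\pi^{-1}v\bigr)$, and one has to check that the condition $1-t_{N-1}>0$ coming from this expression is still exactly complementary to the one obtained in the second case. Apart from that, the sole structural ingredient --- and the place where the hypothesis on node $N$ is actually used --- is the irreducibility of $\widehat W$ (equivalently of $P$): without it Perron--Frobenius would not guarantee a one-dimensional kernel and a codimension-one range for $P-\text{Id}$, on which the whole uniqueness argument rests. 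Granted that, everything else is a routine translation of the proof of Proposition~\ref{UNIPER} with the pair $(W,\eta)$ replaced by $(P,(\eta_i))$.
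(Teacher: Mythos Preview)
Your approach is essentially the paper's: renormalize the principal $(N-1)\times(N-1)$ submatrix by its row sums to obtain the stochastic matrix the paper calls $W_{N-1}$ (your $P$), record that the threshold becomes node-dependent, and rerun Proposition~\ref{UNIPER}. The paper's proof is terser but structurally identical.

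There is, however, a real issue in your justification of primitivity. You write that the hypothesis $w_{iN}<1$ ``renders $\widehat W$ irreducible''. It does not: that hypothesis only guarantees that the row sums $\sigma_i$ are positive, so that the renormalization is defined. Irreducibility of $\widehat W$ is a statement about paths in $\{1,\dots,N-1\}$ avoiding node $N$, and primitivity of $W$ does not supply that. Indeed, take the paper's own unidirectional $3$-cell matrix and silence node $3$: then $w_{13}=\epsilon<1$ and $w_{23}=0<1$, but $\widehat W=\begin{pmatrix}1-\epsilon&0\\ \epsilon&1-\epsilon\end{pmatrix}$ is reducible (no path from $1$ to $2$), so $P=W_{N-1}$ is not irreducible and the Perron--Frobenius argument you invoke for a one-dimensional kernel of $P-\mathrm{Id}$ breaks down.

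The paper itself is brief here --- it simply asserts ``by assumption, the normalized truncated matrix $W_{N-1}$ must be primitive'' --- so you are matching its argument, but the point remains that neither the stated hypotheses nor your proposed reasoning actually deliver this. Your observation that $\widehat W$, as a proper principal submatrix of an irreducible nonnegative matrix, has spectral radius strictly below $1$ is correct and potentially more useful: it says $\widehat W-\mathrm{Id}$ is invertible on \emph{all} of $\R^{N-1}$, which suggests bypassing the renormalization to $P$ altogether and working directly with the substochastic firing equations. That route would avoid the primitivity issue entirely, at the cost of redoing the algebra without the convenient hyperplane structure.
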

\begin{proof}
In this case, the initial coordinate expression writes
\[
R_{\pi^{(N-1)}} x_i=1-t_{N-1}+t_i,\ \forall i\in\{1,\dots,N-1\},
\]
and, again, the cases $1-t_{N-1}>0$ and $1-t_{N-1}\leq 0$ have to be considered separately. In the first case, the firing time equations $W_{(N-1)}x_{\pi^{(N-1)}_i}(t_i)=\eta$ for $i\in\{1,\dots,N-1\}$ can be written as $R_{\pi^{(N-1)}}\Delta_{N-1} x=u$ where $\Delta_{N-1}=W_{N-1}-\text{Id}|_{\R^{N-1}}$ and 
\[
u_i=1-\frac{\eta+t_{N-1}v_i}{1-\omega_{\pi^{(N-1)}_iN}}\quad\text{with}\quad v_i=\sum_{j=i}^{N-1}w_{\pi^{(N-1)}_i\pi^{(N-1)}_j}\ \forall i\in\{1,\dots,N-1\}.
\]
By assumption, the normalized truncated matrix $W_{N-1}=\left(\frac{\omega_{ij}}{1-\omega_{iN}}\right)_{i,j=1}^{N-1}$ must be primitive. Hence, the same arguments as before apply and it follows that these equations have a unique solution, given the parameters. The reasoning is similar in the complementary cases $1-t_{N-1}\leq 0$. The details are left to the reader. 
\end{proof}

\section{Asymptotic periodicity}
Theorem 5.1 in \cite{BF16} combined the double-stochasticity assumption with the existence of a length 2 loop in the graph with adjacency matrix $W$ to ensure asymptotic periodicity (with minimal period) for every trajectory whose firing sequence is exhaustive. However, the existence of a length 2 loop appears to be a restrictive condition, which can only be satisfied for a very limited set of matrices. Accordingly, this section aims to replace this assumption by a more general one, so that it includes more  interaction matrices $W$. This broader assumption turns out to be firing pattern dependent and the example in the next Section justifies such dependence as a necessary condition. 

\begin{Pro}
Assume that $W$ is irreducible and doubly-stochastic and consider a permutation $\pi$ of $\{1,\dots,N\}$ that is compatible with a non-trivial loop in the graph with adjacency matrix $W$ (ie.\ if $i$ precedes $j\neq i$ in the loop, then so does it in $(\pi_k)_{k=1}^N$). Then, for any trajectory $t\mapsto x(t)$ with periodic firing sequence obtained by repeating $(\pi_k)_{k=1}^N$, we have
\[
\lim_{k\to+\infty} x(t_{kN}+0)=R_\pi x,
\]
where $t_\ell$ ($\ell\in\N$) is the trajectory $\ell$th firing time and $R_\pi x$ is the initial condition of the minimal periodic trajectory associated with $\pi$.
\label{ASYMPTHM}
\end{Pro}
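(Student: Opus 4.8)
The plan is to reduce the statement to the convergence of a finite-dimensional "stroboscopic" iteration, in the spirit of the proof of Proposition \ref{UNIPER}. Fix a trajectory $t\mapsto x(t)$ with firing times $(t_\ell)_{\ell\in\N}$ and firing word $(\pi_k)_{k=1}^N$ repeated, and for $k\ge 1$ set $a^{(k)}_j=t_{kN}-t_{(k-1)N+j}$, $j\in\{1,\dots,N\}$ (so $a^{(k)}_N=0$); the configuration at the end of the $k$-th cycle is recovered through $x_{\pi_j}(t_{kN}+0)=(1-a^{(k)}_j)^+$ and $x_{\pi_N}(t_{kN}+0)=1$, so it suffices to prove $a^{(k)}\to a^\ast$ for the appropriate limit vector. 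Writing the within-cycle firing times $\sigma^{(k+1)}_i=t_{kN+i}-t_{kN}$ and imposing the firing relations $(Wx(t_{kN+i}))_{\pi_i}=\eta$ — each cell $\pi_j$ having last been reset at $t_{kN+j}$ if $j<i$ and at $t_{(k-1)N+j}$ if $j\ge i$ — and using $\sum_j w_{\pi_i\pi_j}=1$, one obtains, \emph{as long as no expression level reaches $0$ strictly before its reset during the cycle}, the linear system
\[
(\mathrm{Id}-L_0)\,\sigma^{(k+1)}=(1-\eta)\mathbf 1-U_0\,a^{(k)},
\]
where $L_0$ and $U_0$ are the strict-lower and the upper (diagonal included) triangular parts of $R_\pi W R_\pi^{-1}$. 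Since $a^{(k+1)}_j=\sigma^{(k+1)}_N-\sigma^{(k+1)}_j$, this yields an affine recursion $a^{(k+1)}=Ma^{(k)}+b$ with $M=(\mathrm{Id}-\mathbf 1e_N^T)(\mathrm{Id}-L_0)^{-1}U_0$.

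Then I would analyse the linear part. The matrix $\tilde Q:=(\mathrm{Id}-L_0)^{-1}U_0$ is entrywise non-negative and stochastic — its rows sum to $1$ because $U_0\mathbf 1=(\mathrm{Id}-L_0)\mathbf 1$, which is just the row-stochasticity of $W$ — and a short induction using $\tilde Q\mathbf 1=\mathbf 1$ gives $M^k=\tilde Q^k-\mathbf 1e_N^T\tilde Q^k$ for all $k$. Moreover $\tilde Q_{ii}\ge w_{\pi_i\pi_i}>\eta>0$, so $\tilde Q$ is aperiodic; and since $W$ is irreducible and doubly stochastic the only left fixed row vector of $R_\pi W R_\pi^{-1}$ is proportional to $\mathbf 1^T$, which forces $\tilde Q$ to have a unique stationary distribution $\nu$, with $\nu_j\propto\sum_{i\le j}w_{\pi_i\pi_j}>0$, hence a unique and full recurrent class. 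Thus $\tilde Q$ is primitive, $\tilde Q^k\to\mathbf 1\nu^T$, so $M^k\to 0$, $\rho(M)<1$, and $a^{(k)}\to a^\ast:=(\mathrm{Id}-M)^{-1}b$. Finally $a^\ast$ is, by construction, the data of a periodic trajectory of \eqref{eq:diffEQ} with exactly $N$ firings, i.e.\ of a minimal periodic orbit with word $(\pi_k)$; since a doubly stochastic irreducible matrix with positive diagonal is primitive, Proposition \ref{UNIPER} applies and such an orbit is unique, so its initial condition is $R_\pi x$. Translating $a^{(k)}\to a^\ast$ back to configurations gives $x(t_{kN}+0)\to R_\pi x$.

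The point that needs genuine care — and where compatibility of $\pi$ with a non-trivial loop is indispensable — is the proviso in the first step: the true recursion is only \emph{piecewise} linear, since a cell whose level reaches $0$ before being reset contributes $0$ instead of $1-\sigma_i-a_j$ to a firing relation (this is exactly the dichotomy $1-t_N>0$ versus $1-t_N\le 0$ met in Proposition \ref{UNIPER}, now iterated). The plan is to show that loop-compatibility confines the asymptotic dynamics to the region where the genuine linear recursion holds — so that the contraction $M^k\to 0$ takes over — whereas when compatibility fails a cell vanishes persistently "out of phase", the effective stroboscopic map ceases to be contracting, and, as the $3$-cell chain of the next section shows, a whole one-parameter family of non-minimal periodic trajectories with the same firing word survives in place of a single attracting loop. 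Making this dichotomy precise, i.e.\ establishing that loop-compatibility rules out the bad vanishing pattern (equivalently, preserves the non-degeneracy structure carried by the loop), is the main obstacle.
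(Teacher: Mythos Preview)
Your reduction to the firing-time recursion is clean, and the analysis of the non-degenerate linear map is correct: the stochastic matrix $\tilde Q=(\mathrm{Id}-L_0)^{-1}U_0$ is indeed primitive whenever $W$ is irreducible and doubly stochastic, independently of any loop-compatibility. But that very independence should be a warning sign, because it tells you the compatibility hypothesis must enter elsewhere --- and you correctly locate the gap in the vanishing (``firing from~$0$'') regime, where the recursion becomes piecewise linear. The problem is that you do not close this gap; you only announce a plan, and the plan is wrong. You propose to show that loop-compatibility ``confines the asymptotic dynamics to the region where the genuine linear recursion holds.'' It does not: in the very $3$-cell chain of Section~4.2, the loop-compatible word $(1,2,3)$ has its unique minimal periodic orbit firing from~$0$ throughout the parameter range $\frac{\epsilon(1-\epsilon)}{2-3\epsilon}<\eta\le\frac{2\epsilon}{3}$. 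Any trajectory converging to that orbit is asymptotically \emph{in} the degenerate region, so you cannot hope to avoid it.

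The paper's route is quite different and confronts the degenerate regime head-on. Following \cite{BF16}, the return map after $N$ firings is controlled on $\Sigma=\{x:\sum_i x_i=0\}$ by the product $L_\pi=L_{\pi_N}\cdots L_{\pi_1}$ of column-stochastic matrices $L_i$ that are precisely the linearisations of the single-firing map when the firing cell is at~$0$; the non-degenerate pieces are then dominated by these. Contraction in $\ell^1$ follows once some power $(L_\pi^k)^T$ is scrambling, and this is exactly where loop-compatibility is spent: it guarantees $(L_\pi)_{\pi_1\pi_1}>0$, which (together with irreducibility of $W$) forces a full positive column in $L_\pi^k$ for large $k$. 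So in the paper the hypothesis acts on the spectral/ergodic side of the degenerate linear part, not on the geometry of which piece the orbit visits. If you want to salvage your framework, you would need to write down the alternative linear pieces of your recursion (one for each vanishing pattern) and show that loop-compatibility makes the resulting non-homogeneous product contract --- essentially re-deriving the scrambling lemma in your coordinates --- rather than trying to rule those pieces out.
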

We are only concerned with convergence to periodic orbit as a whole and not to its individual periodic points. Therefore, it suffices that the compatibility assumption above holds for an appropriate cyclic permutation of $\pi$ (or of the loop). In particular, when the graph has a length 2 loop (ie.\ $i\to j\to i$), compatibility holds for every exhaustive firing sequence; in other words, Theorem 5.1 in \cite{BF16} is a special case of Proposition \ref{ASYMPTHM} here. 

The proof of Proposition \ref{ASYMPTHM} follows the same lines as the proof of Theorem 5.1 in \cite{BF16} and consists in showing that a sufficiently large iterate of the return map after $N$ firing is a contraction in $\Sigma:=\{x\in \R^N\ :\ \sum_i x_i=0\}$, for the $\ell^1$-norm. Here we only provide details of the part not already covered in \cite{BF16}. 

Given $i\in \{1,\dots,N\}$, let $L_i$ be the matrix defined by (NB: $W$ irreducible implies $w_{ii}<1$ for all $i$)
\[
(L_{i})_{jk}=\frac{w_{ji}}{1-w_{ii}}\delta_{ik} \, , \ \forall k\neq j\in\{1,\cdots ,N\}\quad\text{and}\quad 
(L_{i})_{jj}=1-\delta_{ij} \, ,\ \forall j\in\{1,\cdots ,N\} \, ,
\]
and, given a permutation $\pi$, consider the product $L_\pi$ defined by
\[
L_\pi=L_{\pi_N}\dots L_{\pi_2}L_{\pi_1}.
\]
The desired contraction is a consequence of the following statement and of the fact that if $M$ is a scrambling matrix, then the transpose $M^T$ contracts in $\Sigma$ for the $\ell^1$-norm \cite{S79}.
\begin{Lem}
Assume that $W$ is irreducible and doubly-stochastic and consider a permutation $\pi$ of $\{1,\dots,N\}$ that is compatible with a non-trivial loop in the graph associated with $W$. There exists $k\in\N$ such that the row-stochastic matrix $(L_\pi^k)^T$ is scrambling.
\end{Lem}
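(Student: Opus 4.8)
The plan is to reduce the statement to producing one distinguished node. For $j\in\N$ write $S_a^{(j)}$ for the support of the $a$-th column of $L_\pi^j$, so that $S_a^{(1)}$ is the set of coordinates that can be nonzero in $L_\pi e_a$, i.e.\ after one round of firings $L_{\pi_N}\cdots L_{\pi_1}$ is applied to $e_a$. Every $L_i$ has nonnegative entries (and $w_{ii}<1$ since $W$ is irreducible) and is column-stochastic (since $W$ is doubly stochastic), so $L_\pi^j$ is column-stochastic; writing $L_\pi^j=L_\pi L_\pi^{j-1}$ and using nonnegativity gives the recursion $S_a^{(j)}=\bigcup_{b\in S_a^{(j-1)}}S_b^{(1)}$. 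It then suffices to exhibit a node $v^\ast$ with (a) $v^\ast\in S_{v^\ast}^{(1)}$ and (b) $v^\ast\in S_a^{(\ell)}$ for some $\ell=\ell(a)\in\N$ and every $a$: by (a) and the recursion, $v^\ast\in S_{v^\ast}^{(j)}$ for all $j$, and then with (b), $v^\ast\in S_a^{(k)}$ as soon as $k\geq\max_a\ell(a)$; for such $k$ the $v^\ast$-th row of $L_\pi^k$ is everywhere positive, hence the $v^\ast$-th column of $(L_\pi^k)^T$ is everywhere positive and $(L_\pi^k)^T$ is scrambling. I will take $v^\ast$ to be the node that fires first among the nodes of the loop from the hypothesis.

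The main tool is a cascade lemma. Write $u\rightsquigarrow v$ to mean $v\neq u$ and $w_{vu}>0$; by the formula for $L_u$ this says exactly that firing $u$ passes positive mass to $v$. Claim: if $u=d_0\rightsquigarrow d_1\rightsquigarrow\cdots\rightsquigarrow d_t$ with strictly increasing ranks $\pi^{-1}(d_0)<\cdots<\pi^{-1}(d_t)$, and $d_t\rightsquigarrow v$ with $\pi^{-1}(v)<\pi^{-1}(d_t)$, then $v\in S_u^{(1)}$. The proof is a direct run of $L_{\pi_N}\cdots L_{\pi_1}$ on $e_u$: nonnegativity rules out cancellations; each $d_s$ still carries positive mass at the moment it is fired, because it received mass when the strictly-earlier node $d_{s-1}$ was fired; and the mass $d_t$ sends to $v$ is never removed afterwards since $v$ was fired before $d_t$. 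I also record two readings of the hypotheses: irreducibility of $W$ means the relation $\rightsquigarrow$ makes the node set strongly connected, so any two nodes are joined by a simple $\rightsquigarrow$-path; and loop-compatibility provides — possibly after the harmless cyclic relabelling of $\pi$ (or of the loop) already noted — distinct nodes $c_1,\dots,c_L$ with $L\geq 2$, $\pi^{-1}(c_1)<\pi^{-1}(c_2)<\cdots<\pi^{-1}(c_L)$, and $c_r\rightsquigarrow c_{r+1}$ for all $r$ (indices mod $L$). I set $v^\ast:=c_1$.

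Property (a) is then immediate: $c_1\rightsquigarrow c_2\rightsquigarrow\cdots\rightsquigarrow c_L$ has strictly increasing ranks and $c_L\rightsquigarrow c_1$ with $\pi^{-1}(c_1)<\pi^{-1}(c_L)$, so the cascade lemma gives $c_1\in S_{c_1}^{(1)}$ (the only place $L\geq 2$ enters). For (b), fix $a$ and choose a simple $\rightsquigarrow$-path $a=f_0\rightsquigarrow f_1\rightsquigarrow\cdots\rightsquigarrow f_q=c_1$. Its nodes, hence their ranks, are distinct, so the path splits uniquely into maximal $\pi$-increasing runs $R_1,\dots,R_m$ (read along the path), consecutive runs being joined by a \emph{descent} edge $f_j\rightsquigarrow f_{j+1}$ with $\pi^{-1}(f_{j+1})<\pi^{-1}(f_j)$. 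For $t<m$, the cascade lemma applied to the run $R_t$ followed by the descent into $R_{t+1}$ shows that the first node of $R_{t+1}$ lies in $S_{b_t}^{(1)}$, with $b_t$ the first node of $R_t$. For the last run $R_m$, which ends at $f_q=c_1$: since $\pi^{-1}(c_1)$ is the largest rank in $R_m$ and the $c_r$ follow in increasing rank, appending $c_1\rightsquigarrow c_2\rightsquigarrow\cdots\rightsquigarrow c_L$ extends $R_m$ to a $\pi$-increasing chain, and the descent $c_L\rightsquigarrow c_1$ then places $c_1$ in $S_{b_m}^{(1)}$ with $b_m$ the first node of $R_m$. Concatenating these $m$ one-round steps yields $c_1\in S_a^{(m)}$, which is (b); the lemma follows.

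The genuinely delicate point is (b). A single application of the support map $S^{(1)}$ can only carry mass "uphill" in rank along $\rightsquigarrow$ within one round and then release it "downhill" exactly once, so a generic $\rightsquigarrow$-path from $a$ to $c_1$ cannot be realised in a single round; chopping the path at its descents handles the interior, but the terminal ascent — which may end at a high rank, namely at $c_1$ itself — must be absorbed, and the only available mechanism is to continue around the loop and cash in the genuine descent $c_L\rightsquigarrow c_1$. This is exactly where the non-triviality of the loop ($L\geq 2$, so that $\pi^{-1}(c_1)<\pi^{-1}(c_L)$) and its compatibility with $\pi$ are used, and it explains why the length-$2$ case of \cite{BF16} is a special instance: there the last run plus the loop is just a two-step increasing chain ending in one descent.
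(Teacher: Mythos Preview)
Your proof is correct and follows essentially the same route as the paper's: both arguments show that a single row of $L_\pi^k$ is entirely positive by propagating mass along a $\rightsquigarrow$-path from an arbitrary node to a distinguished loop node, spending one extra full round of $L_\pi$ at each rank descent along the path, and using the compatible loop to secure the self-return $(L_\pi)_{v^\ast v^\ast}>0$. Your packaging via the cascade lemma and the decomposition into maximal $\pi$-increasing runs is a tidy reformulation of the paper's ad hoc $k_0<k_1$ versus $k_1<k_0$ case split, and your choice $v^\ast=c_1$ (the first-firing loop node) rather than $\pi_1$ avoids the paper's implicit reliance on a cyclic relabelling to place $\pi_1$ on the loop.
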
 
\begin{proof}
We are going to prove the existence ok $k\in\N$ such that 
\[
(L_\pi^k)_{\pi_1j}>0,\ \forall j\in\{1,\dots,N\},
\]
which is a special case of scrambling property for $(L_\pi^k)^T$. 

By irreducibility of $W$, given an arbitrary $j\in\{1,\dots,N\}$, let $(j_k)_{k=1}^K$ ($K\in\{1,\dots, N-1\}$) be the shortest word (of length at least 2) such that 
\[
w_{\pi_1j_K}w_{j_Kj_{K-1}}\dots w_{j_2j_1}w_{j_1j}>0.
\]
Let $k_0\in\{1,\dots,N\}$ be such that $\pi_{k_0}=j$. The definition of $L_\pi$ and $1-w_{ii}<1$ imply
\[
(L_{\pi_{k_0}}\dots L_{\pi_1})_{j_1j}\geq w_{j_1j}.
\]
Let now $k_1$ be such that $\pi_{k_1}=j_1$ and consider separately the cases $k_0<k_1$ and $k_1<k_0$. In the first case, we have 
\[
(L_{\pi_{k_1}}\dots L_{\pi_{k_0+1}})_{j_2j_1}=(L_{\pi_{k_1}})_{j_2j_1}\geq w_{j_2j_1},
\]
and then 
\[
(L_{\pi_{k_1}}\dots L_{\pi_1})_{j_2j}\geq w_{j_2j_1}w_{j_1j}.
\]
In the second case, we have 
\[
(L_{\pi_N}\dots L_{\pi_{k_0+1}})_{j_2j_1}=0,
\]
so no positive estimate holds for $(L_\pi)_{j_2j}$. However, we certainly have 
\[
(L_\pi)_{j_1j}\geq w_{j_1j}\quad\text{and}\quad
(L_{\pi_{k_1},}\dots L_{\pi_1})_{j_2j_1}=(L_{\pi_{k_1}})_{j_2j_1}\geq w_{j_2j_1},
\]
hence
\[
(L_{\pi_{k_1},}\dots L_{\pi_1}L_\pi)_{j_3j_1}\geq w_{j_2j_1}w_{j_1j} \, .
\]
By repeating this process, we obtain that there exists $n\in\{0,\dots, K\}$ for every $(j_k)_{k=1}^K$ such that 
\[
(L_{\pi_{k_K}}\dots L_{\pi_1}L_\pi^n)_{\pi_1j}\geq w_{\pi_1j_K}w_{j_Kj_{K-1}}\dots w_{j_2j_1}w_{j_1j}>0.
\]
In particular, for $j=\pi_1$, the assumption that $\pi$ is compatible with a non-trivial loop in the graph of $W$ implies
\[
(L_\pi)_{\pi_1\pi_1}=w_{\pi_1j_K}\dots w_{j_1\pi_1}>0.
\]
When $j\neq \pi_1$, we have $\pi_1\not\in\{\pi_{k_K+1},\dots, \pi_N\}$ and hence
\[
(L_{\pi_{N}}\dots L_{\pi_{k_K+1}})_{\pi_1\pi_1}=1\ \Longrightarrow (L_\pi^{n+1})_{\pi_1j}>0.
\]
Letting $k=\max_{\{j_k\}_{k=1}^K} n$, we can multiply $L_\pi^{n+1}$ by $L_\pi^{k-(n+1)}$ to obtain the desired estimate $(L_\pi^{k})_{\pi_1j}>0$. \end{proof} 

\section{Unidirectional 3-cell system}
In order to illustrate previous results, we consider here an example of interaction matrix $W$ that only satisfies the compatibility assumption of Proposition \ref{ASYMPTHM} for an appropriate cyclic permutation of some $\pi$ but not all (NB: in previous examples \cite{BF16,FT11}, the assumptions held for appropriate cyclic permutation of every $\pi$). We proceed to an exhaustive analysis of the dynamics that aims to determine all possible asymptotic behaviors depending on parameters. In particular, we show that when the compatibility assumption in Proposition \ref{ASYMPTHM} does not hold, the conclusion may fail as well; hence necessity. 

In this example, $N=3$ and the matrix $W$ is defined by
\[
W=\left(\begin{array}{ccc}
1-\epsilon&0&\epsilon\\
\epsilon&1-\epsilon&0\\
0&\epsilon&1-\epsilon
\end{array}\right),
\]
where $\epsilon\in (0,1-\eta)$ (so that the dynamics \eqref{eq:diffEQ} is well defined for all initial conditions $x$ such that $\min_i Wx_i>\eta$.). Notice that the dynamics commutes with cyclic permutations of indexes. 

\subsection{Transition graph}
The analysis first task is to determine possible transitions between the states immediately after firing, {\sl viz.}\ when one coordinate $x_i=1$. Thanks to the cyclic permutation symmetry, we can assume w.l.o.g.\ that $x_3=1$ and we are going to determine all possible states after the first firing, when starting from $x=(x_1,x_2,1)$ (such that $\min_i Wx_i>\eta$). The analysis investigates the two following cases separately
\begin{itemize}
\item[$\bullet$] Either $\min_i Wx_i-\eta< \min\{x_1,x_2\}$. Then $i_m=\arg\min_i Wx_i$ is the first cell to fire. It fires from above 0 and the first firing time is given by $t_f=Wx_{i_m}-\eta$.
\item[$\bullet$] Or $\min\{x_1,x_2\}\leq\min_i Wx_i-\eta$. Then, one needs to examine the expression of $Wx_i(t)$ for $t\geq \min\{x_1,x_2\}$ in order to compute $\min_i t_i$ where $t_i$ is defined by $Wx_i(t_i)=\eta$. Then $i_m=\arg\min t_i$ is the first cell to fire. It fires from 0 with firing time $t_{i_m}$ 
\end{itemize} 
For the sake of simplicity, we shall focus on trajectories for which all coordinates are always distinct, meaning that no two cells ever fire together. Indeed we shall see at the end of this Section that the complementary case deals with exceptional initial conditions (or parameters); so it is not relevant for modelling purpose.

For convenience, we denote by $S_{12}$ (resp.\ $S_{21}$) any configuration $(x_1,x_2,1)$ with $x_1<x_2$ (resp.\ $x_2<x_1$). The states $S_{23},S_{32},S_{13}$ and $S_{31}$ are defined similarly. The analysis of possible transition between these states yields two distinct graphs, one for $\epsilon\leq\tfrac12$, the other for $\epsilon> \tfrac12$, that are given on Figure \ref{TRANSGRAPHS}.
\begin{figure}[ht]
\begin{center}
\includegraphics*[width=60mm]{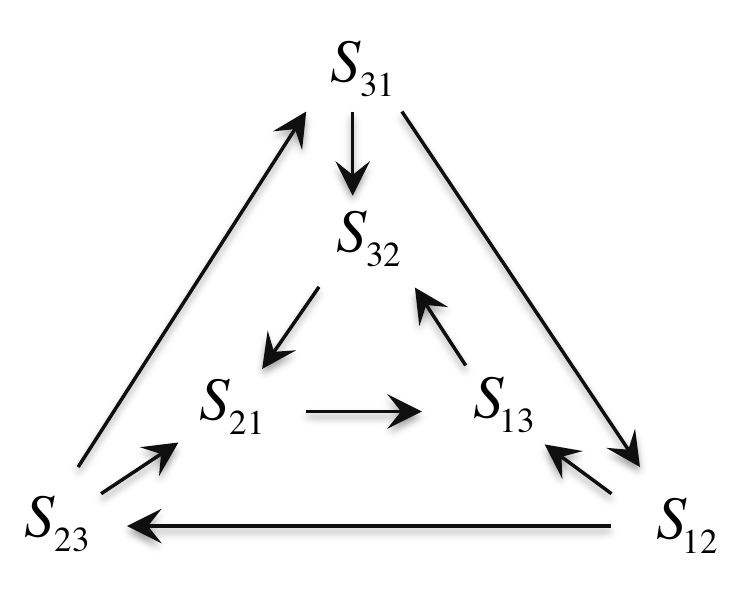}
\hspace{1cm}
\includegraphics*[width=60mm]{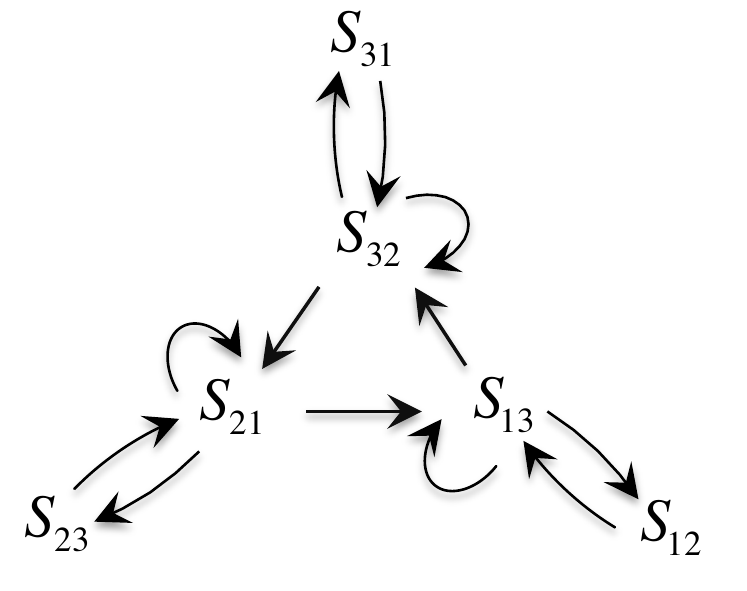}
\end{center}
\caption{Transition graphs between states immediately after firing in the unidirectional 3 cell system. Left: $\epsilon\leq\tfrac12$ Right: $\epsilon> \tfrac12$.}
\label{TRANSGRAPHS}
\end{figure}

\noindent
{\bf Analysis of possible transitions.} In order to determine $\min_i Wx_i$, we observe that
\[
Wx_1=(1-\epsilon)x_1+\epsilon,\ Wx_2=(1-\epsilon)x_2+\epsilon x_1, Wx_3=1-\epsilon +\epsilon x_2.
\]
Then, straightforward calculations yield
\[
\min_i Wx_i=\left\{\begin{array}{cl}
Wx_1&\text{if}\ \epsilon\leq \tfrac12\ \text{and}\ \frac{1-2\epsilon}{1-\epsilon}x_1+\frac{\epsilon}{1-\epsilon}<x_2\\
Wx_2&\text{if}\ \left\{\begin{array}{l}
\epsilon\leq \tfrac12\ \text{and}\ x_2< \frac{1-2\epsilon}{1-\epsilon}x_1+\frac{\epsilon}{1-\epsilon}\\
\text{or}\ \tfrac12<\epsilon\ \text{and}\ \frac{\epsilon}{2\epsilon-1}x_1-\frac{1-\epsilon}{2\epsilon-1}<x_2 \end{array}\right.\\
Wx_3&\text{if}\ \tfrac12<\epsilon\ \text{and}\ x_2< \frac{\epsilon}{2\epsilon-1}x_1-\frac{1-\epsilon}{2\epsilon-1}
\end{array}\right.
\]
(Of note, equality occurrence in the right inequalities here corresponds to $Wx_i=Wx_j$ for some $i\neq j$; hence to potential simultaneous firing in distinct cells, which we are not considering at this stage.) We study possible transitions in each case separately.
\begin{itemize}
\item[$\bullet$] $\epsilon\leq \frac12$ and $\frac{1-2\epsilon}{1-\epsilon}x_1+\frac{\epsilon}{1-\epsilon}<x_2$ ($\Longleftrightarrow Wx_1< Wx_2$), which implies in particular being in state $S_{12}$. There are two subcases:
\begin{itemize}
\item[-] Either $Wx_1-\eta\leq x_1=\min\{x_1,x_2\}$, then we have $S_{12}\to S_{23}$.
\item[-] Or $x_1<Wx_1-\eta\Longleftrightarrow x_1<1-\frac{\eta}{\epsilon}$ (which requires $\eta<\epsilon$). Then for $t\in [x_1,x_2]$, we have
\begin{equation}
Wx_1(t)=\epsilon (1-t),\ Wx_2(t)=(1-\epsilon)(x_2-t), Wx_3(t)=1-\epsilon +\epsilon x_2 -t.
\label{REPRF}
\end{equation}
Notice that $W_1(t)<W_3(t)$ for all $t$; thus cells $1$ and $2$ only can fire in this time interval (and one of them must fire before $Wx_2(t)$ reaches 0). Moreover, solving the equations $Wx_i(t)=\eta$ for $i=1,2$ shows that cell 1 fires iff 
\begin{equation}
1-\frac{\eta(1-2\epsilon)}{\epsilon(1-\epsilon)}\leq x_2.
\label{1FIRES}
\end{equation}
(NB: equality here corresponds to $1$ and $2$ simultaneously firing.) This inequality can perfectly hold when $\epsilon$ is close enough to $\tfrac12$; hence the transition $S_{12}\to S_{23}$ in this case. Otherwise, if $2$ fires (alone), we have $S_{12}\to S_{13}$.
\end{itemize}
\item[$\bullet$] $\epsilon\leq \frac12$ and $x_2< \frac{1-2\epsilon}{1-\epsilon}x_1+\frac{\epsilon}{1-\epsilon}$, which may occur both for $S_{12}$ and $S_{21}$.
\begin{itemize}
\item[-] Either $Wx_2-\eta\leq \min\{x_1,x_2\}$, which can perfectly happen for either state, depending on parameters (we have $Wx_2-\eta<x_1\Longleftrightarrow x_2<x_1+\frac{\eta}{1-\epsilon}$ and $Wx_2-\eta<x_2\Longleftrightarrow x_1-\frac{\eta}{\epsilon}<x_2$); hence the two transitions $S_{12}\to S_{13}$ and $S_{21}\to S_{13}$. 
\item[-] Or $x_1<Wx_2-\eta$ (which can only happen for $S_{12}$), then for $t\in [x_1,x_2]$, the repressor fields are given by \eqref{REPRF}; hence the transitions $S_{12}\to S_{13}$ and $S_{12}\to S_{23}$ as before.
\item[-] Or $x_2<Wx_2-\eta$ (which can only happen for $S_{21}$), then for $t\in [x_2,x_1]$, the repressor fields are given by
\[
Wx_1(t)=(1-\epsilon)x_1+\epsilon-t,\ Wx_2(t)=\epsilon(x_1-t), Wx_3(t)=(1-\epsilon)(1-t).
\]
Similarly to as before, we have $W_2(t)<W_3(t)$ here, so cells $1$ and $2$ only can fire. Letting $t_i$ be such that $Wx_i(t_i)=\eta$, we have 
\[
t_2=x_1-\frac{\eta}{\epsilon}<1-\frac{\eta}{1-\epsilon}=t_1\ \text{iff}\ x_1<1+\frac{\eta(1-2\epsilon)}{\epsilon(1-\epsilon)},
\]
hence the transition $S_{21}\to S_{13}$ in this case. 
\end{itemize}
\item[$\bullet$] $\tfrac12<\epsilon$ and $\frac{\epsilon}{2\epsilon-1}x_1-\frac{1-\epsilon}{2\epsilon-1}<x_2$, which may occur both for $S_{12}$ and $S_{21}$. The analysis is similar to the case $\epsilon\leq \frac12$ and $x_2< \frac{1-2\epsilon}{1-\epsilon}x_1+\frac{\epsilon}{1-\epsilon}$, except that when $x_1<Wx_2-\eta$, we must have $S_{12}\to S_{13}$ because \eqref{1FIRES} cannot hold for $\epsilon>\tfrac12$.
\item[$\bullet$] $\tfrac12<\epsilon$ and $x_2<\frac{\epsilon}{2\epsilon-1}x_1-\frac{1-\epsilon}{2\epsilon-1}$, which implies being in state $S_{21}$.
\begin{itemize}
\item[-] Either $Wx_3-\eta\leq x_2=\min\{x_1,x_2\}$ and then $S_{21}\to S_{21}$. 
\item[-] Or $x_2<Wx_3-\eta\Longleftrightarrow x_2<1-\frac{\eta}{1-\epsilon}$. Then one can repeat the analysis in the case $x_2<Wx_2-\eta$, to conclude that possible transitions are $S_{21}\to S_{13}$ and also $S_{21}\to S_{23}$, since we may now have $x_1>1+\frac{\eta(1-2\epsilon)}{\epsilon(1-\epsilon)}$.
\end{itemize}
\end{itemize}
In each case $\epsilon\leq \tfrac12$ and $\epsilon>\tfrac12$, the transition graphs of Figure \ref{TRANSGRAPHS} then follow by applying cyclic permutations. 

Furthermore, the analysis for $\epsilon\leq \tfrac12$ shows that the only way cells 1 and 2 fire together is when $Wx_1=Wx_2$ or when equality holds in \eqref{1FIRES}, which happens only two  specific segments of the square $[0,1]^2$ (viz. $x_2=\frac{1-2\epsilon}{1-\epsilon}x_1+\frac{\epsilon}{1-\epsilon}$ and $x_2=1-\frac{\eta(1-2\epsilon)}{\epsilon(1-\epsilon)}$). Then, cell 3 will necessary be the next cell to fire, and the reset state after second firing writes $(x,x,1)$ with $x<1$. Then either cell 1 or cell 2 fires alone at subsequent firing. Moreover, $x$ lies in some interval (only when $Wx_1=Wx_2$) or takes a single value. So possible states after the third firing lie inside one segment or a single point in the square $(x_1,x_3)\in [0,1]^2$ or $(x_2,x_3)\in [0,1]^2$. By applying cyclic permutation to the previous consideration, we conclude that subsequent simultaneous firing can only happen for a subset of the original segments; hence they consist of exceptional events in phase space or parameter space. Similar considerations apply for $\epsilon>\tfrac12$.

\subsection{Asymptotic dynamics for $\epsilon\leq\tfrac12$}
The left transition graph in Fig.\ \ref{TRANSGRAPHS} indicates that, for $\epsilon\leq\tfrac12$, the firing sequence of any trajectory (for which all coordinates remain distinct) must be periodic or eventually periodic and must satisfy the following alternative:
\begin{itemize}
\item[$\bullet$] Either it repeats (some cyclic permutation of) the word $(1,2,3)$. In this case, the trajectory cycles the outer loop $(S_{12},S_{23},S_{31})$ in Fig.\ \ref{TRANSGRAPHS}. By Propositions \ref{UNIPER} and \ref{ASYMPTHM}, it must asymptotically approach the unique minimal periodic orbit associated with $(1,2,3)$. Explicit calculations (below) show that this behavior takes place for an open set of initial conditions when $\epsilon<\tfrac13$, and does not occur for any trajectory when $\epsilon>\tfrac13$.
\item[$\bullet$] Or it (eventually) repeats (some cyclic permutation of) the word $(2,1,3)$ and the trajectory (eventually) cycles the inner loop $(S_{21},S_{13},S_{32})$. While uniqueness holds for the minimal periodic orbit associated with $(2,1,3)$ (Proposition \ref{UNIPER}), this word is not compatible with any loop in the graph of $W$ and Proposition \ref{ASYMPTHM} does not apply. In fact, while for $\epsilon<\eta$, every trajectory that cycles the inner loop must asymptotically approach this minimal periodic orbit, there exists $\epsilon_\eta\in (3\eta,\tfrac12]$ for every $\eta\in (0,\tfrac19)$ and for every $\epsilon\in (3\eta,\epsilon_\eta)$, there exists an open subset $U_{\epsilon,\eta}$ of the unit square such that, for every $(x_1,x_2)\in U_{\epsilon,\eta}$, the trajectory issued from $(x_1,x_2,1)$  is given by 
\[
(x_1,x_2,1)\to (\frac{\eta}{\epsilon},1,1-x_1+\frac{\eta}{\epsilon})\to (1,x_1,\frac{\eta}{\epsilon})\to (1-x_1+\frac{\eta}{\epsilon},\frac{\eta}{\epsilon},1)\to (\frac{\eta}{\epsilon},1,x_1)\to\dots.
\]
In particular, this trajectory has periodic firing sequence with repeated word $(2,1,3)$ but when $x_1\neq 1-x_1+\frac{\eta}{\epsilon}$, it is eventually non-minimally periodic because returns to reset states only occur after every cell has fired twice. 
\end{itemize}

\noindent
{\bf Analysis of periodic behaviors.} The first case of the above alternative is associated with the existence of the periodic orbit with firing pattern $(1,2,3)$. Thanks to commutation with cyclic permutations, these orbit states must have equi-distributed coordinates; in particular, the state $S_{12}$ must write $(1-2t_f,1-t_f,1)$ where $t_f$ is its firing time. Considerations in the previous Section imply that this periodic orbit exists iff $\epsilon<\tfrac13$ and certain conditions on the parameter $\eta$ hold. More precisely, 
\begin{itemize}
\item[$\bullet$] the orbit exists and fires from above 0 iff $\tfrac{2\epsilon}3<\eta$. (NB: The firing time $t_f$ is given by $t_f=\frac{1-\eta}{3-2\epsilon}$.)
\item[$\bullet$]the orbit exists and fires from 0 iff $\frac{\epsilon(1-\epsilon)}{2-3\epsilon}<\eta\leq \tfrac{2\epsilon}3$. (NB: The firing time $t_f$ is given by $t_f=1-\frac{\eta}{\epsilon}$.)
\end{itemize}

As existence of the family of periodic orbits in the alternative second case is concerned, we first want that the first firing when starting from $(x_1,x_2,1)$ occurs from 0 in cell 2, ie.\
\[
x_2<\min\{x_1,Wx_1-\eta,Wx_2-\eta\}.
\] 
Direct calculations show that these conditions are equivalent to 
\[
\frac{\eta}{\epsilon}<x_1\ \text{and}\ x_2<\min\{(1-\epsilon)x_1-\eta,x_1-\frac{\eta}{\epsilon}\}.
\]
Similar conditions are also required for the reset state $(\frac{\eta}{\epsilon},1,1-x_1+\frac{\eta}{\epsilon})$, namely we impose firing from 0 at cell 1. These conditions yield the following additional restriction
\[
x_1<1-\max\{\frac{2\eta}{1-\epsilon},\frac{\eta}{\epsilon}\}.
\]
Then, for the subsequent reset state $(1,x_1,\frac{\eta}{\epsilon})$, we have
\[
\max\{\frac{2\eta}{\epsilon},\frac{(1+\epsilon)\eta}{\epsilon(1-\epsilon)}\}<x_1,
\]
and the state after the next firing is $(1-x_1+\frac{\eta}{\epsilon},\frac{\eta}{\epsilon},1)$. Together with cyclic permutations, the inequalities here imply the existence of a non-minimal periodic orbit existence for every $(x_1,x_2)$ in some set $U_{\epsilon,\eta}$ provided that
\[
\max\{\frac{2\eta}{\epsilon},\frac{(1+\epsilon)\eta}{\epsilon(1-\epsilon)}\}<1-\max\{\frac{2\eta}{1-\epsilon},\frac{\eta}{\epsilon}\}.
\]
Straightforward calculations show that these inequalities are equivalent to $\eta\in (0,\tfrac19)$ and $3\eta<\epsilon<\frac{1-3\eta+\sqrt{1-10\eta+9\eta^2}}2$.

Finally, for $\epsilon<\eta$, every firing in a trajectory must occur from above 0. In this case, Remark 5.3 in \cite{BF16} implies that the minimal periodic orbit associated with $(2,1,3)$ attracts every trajectory that shares the same firing sequence.

\section{Concluding Remarks}
In this paper, we have considered the dynamics of networks of simple degrade-and-fire oscillators. In complement to previous results, sufficient conditions on the interaction matrix $W$ for asymptotic periodicity in presence of exhaustive firings have been obtained. Moreover, a counter-example has been given, which shows that the conditions are optimal. 

When combined with the conditions that exclude non-exhaustive firings (mentioned in the introduction) and other considerations (such as immersion in symmetry subspace and other gene silencing), these results provide a comprehensive description of the degrade-and-fire dynamics for a large panel of scenarios. 

Furthermore, while non-exhaustive firings cannot a priori be excluded in full generality, we believe that such firings are rare, if not exceptional in the asymptotic dynamics. For instance, one can check that a (minimal) periodic orbit with the simplest non-exhaustive firing sequence of the right transition graph in Fig.\ \ref{TRANSGRAPHS} cannot  exist (see analysis below). We believe that similar impossibility prevails for any other periodic trajectory with non-exhaustive firing in this graph. To a larger extent, to investigate the existence of recurrent behavior with non-exhaustive firing sequence in full generality will be the subject of future studies. 
\bigskip

\noindent
{\bf Periodic orbit with non-exhaustive firing sequence in the unidirectional 3-cell system.}
Following the right transition graph in Fig.\ \ref{TRANSGRAPHS}, we suppose the existence of a periodic orbit that returns to its initial state exactly after 6 firings. Assuming w.l.o.g.\ that the trajectory is initially in state $S_{21}$ (see Section 4 for notation), the reset states after firings are then given by the following path
\[
S_{21}\to S_{13}\to S_{13}\to S_{32}\to S_{32}\to S_{32}\to S_{21}\to S_{21},
\]
where the coordinates of the first and last states must coincide in order to comply with (minimal) periodicity. Proceeding similarly as in the proof of Proposition \ref{UNIPER} above, all state coordinates are expressed in terms of the firing times $\{t_i\}_{i=1}^6$. The times themselves are determined by imposing (appropriate) firing at $t=t_i$. At times $t_{2i+1}$ (corresponding to a loop in the same state $S_{k(k-1)}$), the analysis in Section 4.1 implies that the firing conditions write
\[
Wx_2(t_1+0)-\eta=t_2-t_1,\ Wx_1(t_3+0)-\eta=t_4-t_3\ \text{and}\ Wx_3(t_5+0)-\eta=t_6-t_5.
\]
At the other firing times, the condition for the transition $S_{k(k-1)}\to S_{(k-1)(k-2)}$ in the graph yield
\[
t_3-t_2=1-t_2-\tfrac{\eta}{\epsilon},\ t_5-t_4=1-t_4+t_2-\tfrac{\eta}{\epsilon}\ \text{and}\ t_1=1-t_6+t_4-\tfrac{\eta}{\epsilon},
\]
where the periodicity condition $t_7-t_6=t_1$ has been employed in the last equality. 
A system of 6 linear equations results for the $t_i$'s, that has a unique solution such that
\[
t_{2i+3}-t_{2i+1}=t_{2i+2}-t_{2i}=t_2.
\] 
Using this property, one finds that the solution must satisfy $t_1=\frac{(\epsilon-1)\eta}{\epsilon(1-\epsilon)}<0$ which is incompatible with $t_1$ being the first positive firing time; hence the non-existence of the periodic orbit.

\end{document}